\def\BState{\State\hskip-\ALG@thistlm}
\DeclareMathOperator*{\argmin}{arg\,min}
\newcommand{\ie}{\emph{i.e.}}
\newtheorem{proposition}{Proposition}
\newtheorem{definition}{Definition}
\begin{document}
%
% paper title
% Titles are generally capitalized except for words such as a, an, and, as,
% at, but, by, for, in, nor, of, on, or, the, to and up, which are usually
% not capitalized unless they are the first or last word of the title.
% Linebreaks \\ can be used within to get better formatting as desired.
% Do not put math or special symbols in the title.
\title{Mitigating Moral Hazard in Cyber Insurance Using Risk Preference Design}
%
%
% author names and IEEE memberships
% note positions of commas and nonbreaking spaces ( ~ ) LaTeX will not break
% a structure at a ~ so this keeps an author's name from being broken across
% two lines.
% use \thanks{} to gain access to the first footnote area
% a separate \thanks must be used for each paragraph as LaTeX2e's \thanks
% was not built to handle multiple paragraphs
%

\author{Shutian~Liu
        and~Quanyan~Zhu
       % Shutian~Liu,~{\it Graduate Student Member,~IEEE}
       % and~Quanyan~Zhu,~{\it Senior Member,~IEEE}
       % <-this % stops a space
%\thanks{This research is partially supported }
\thanks{The authors are with the Department of Electrical and Computer Engineering, Tandon School of Engineering,
        New York University, Brooklyn, NY, 11201 USA (e-mails: sl6803@nyu.edu; qz494@nyu.edu).}       
}

\pagestyle{empty}

\maketitle
\thispagestyle{empty}

%\overrideIEEEmargins

% As a general rule, do not put math, special symbols or citations
% in the abstract or keywords.
\begin{abstract}
Cyber insurance is a risk-sharing mechanism that can improve cyber-physical systems (CPS) security and resilience. The risk preference of the insured plays an important role in cyber insurance markets. With the advances in information technologies, it can be reshaped through nudging, marketing, or other types of information campaigns. In this paper, we propose a framework of risk preference design for a class of principal-agent cyber insurance problems.  It creates an additional dimension of freedom for the insurer for designing incentive-compatible and welfare-maximizing cyber insurance contracts. Furthermore, this approach enables a quantitative approach to reduce the moral hazard that arises from information asymmetry between the insured and the insurer. We characterize the conditions under which the optimal contract is monotone in the outcome. This justifies the feasibility of linear contracts in practice. This work establishes a metric to quantify the intensity of moral hazard and create a theoretic underpinning for controlling moral hazard through risk preference design.  We use a linear contract case study to show numerical results and demonstrate its role in strengthening CPS security.
\end{abstract}

% Note that keywords are not normally used for peerreview papers.
%\begin{IEEEkeywords}
%Cyber insurance, principal-agent problem, moral hazard, risk design
%\end{IEEEkeywords}

% For peer review papers, you can put extra information on the cover
% page as needed:
% \ifCLASSOPTIONpeerreview
% \begin{center} \bfseries EDICS Category: 3-BBND \end{center}
% \fi
%
% For peerreview papers, this IEEEtran command inserts a page break and
% creates the second title. It will be ignored for other modes.
%\IEEEpeerreviewmaketitle

\section{Introduction}
\label{sec:intro}
% The very first letter is a 2 line initial drop letter followed
% by the rest of the first word in caps.
% 
% form to use if the first word consists of a single letter:
% \IEEEPARstart{A}{demo} file is ....
% 
% form to use if you need the single drop letter followed by
% normal text (unknown if ever used by the IEEE):
% \IEEEPARstart{A}{}demo file is ....
% 
% Some journals put the first two words in caps:
% \IEEEPARstart{T}{his demo} file is ....
% 
% Here we have the typical use of a "T" for an initial drop letter
% and "HIS" in caps to complete the first word.

Recent years have witnessed significant advances in the development and social impacts of Cyber-Physical System (CPS) technologies, including smart grids, autonomous vehicle systems, and implantable medical devices \cite{humayed2017cyber}.
CPS are complex systems integrating various components and technologies (e.g., Information Technologies and Operational Technologies) to enable mission-critical functions.  
As a result, CPS inherits vulnerabilities from many sources.
These vulnerabilities can be exploited by attackers and inflict CPS users significant losses, and they are hard to eliminate. 
There is a crucial need to reduce the risk of CPS users and increase the resiliency in face of attacks.

Cyber insurance is a financial product that transfers cyber risks between the insurer and the insured. 
Complementary to cyber protection/defense mechanisms, it aims to improve the cyber resiliency of a system by mitigating the financial losses as a result of successful attacks.
Cyber insurance involves a contractual agreement that determines the premium and the coverage of the insurance contract. 
The premium is an amount of fee paid by the insured to the insurer to participate in the insurance program.  
The coverage specifies the amount of loss the insurer will cover. With a properly designed contract, the security of CPS users can benefit from  cyber insurance.

One design framework for cyber insurance relies on a class of principal-agent models \cite{grossman1992analysis}, which capture the information asymmetry between the insurer and the insured. 
The asymmetry leads to the moral hazard phenomenon \cite{holmstrom1979moral} in which the insured users tend to behave more recklessly when interacting with the system and result in severe consequences.
Risk aversion plays an important role in moral hazard. 
It is known that risk-averse behaviors of the user can increase the gap between the principal’s profits obtained from the contract with the full observability of the user's action and the contract when the user's action is hidden.  
This phenomenon is absent when the insurer meets a risk-neutral user \cite{chade2002risk}. 
Hence, measuring the level of risk aversion is essential in investigating the effect of moral hazard on cyber insurance.

Risk measure \cite{artzner1999coherent} provides a convenient way to quantify the risks and model agents’ risk perception or preferences. 
Different from the risk described under the expected utility theory, risk measure is richer in capturing individual perceptions of risks induced by the probabilistic nature of random outcomes. 
The design of risk measure is a way to control the risk preferences of the agents using exogenous influences such as nudging, marketing, and propaganda \cite{slovic2006risk, slovic1995construction}.

\begin{figure}[ht]
\centering
\vspace{-4.2mm}\includegraphics[width=0.44\textwidth]{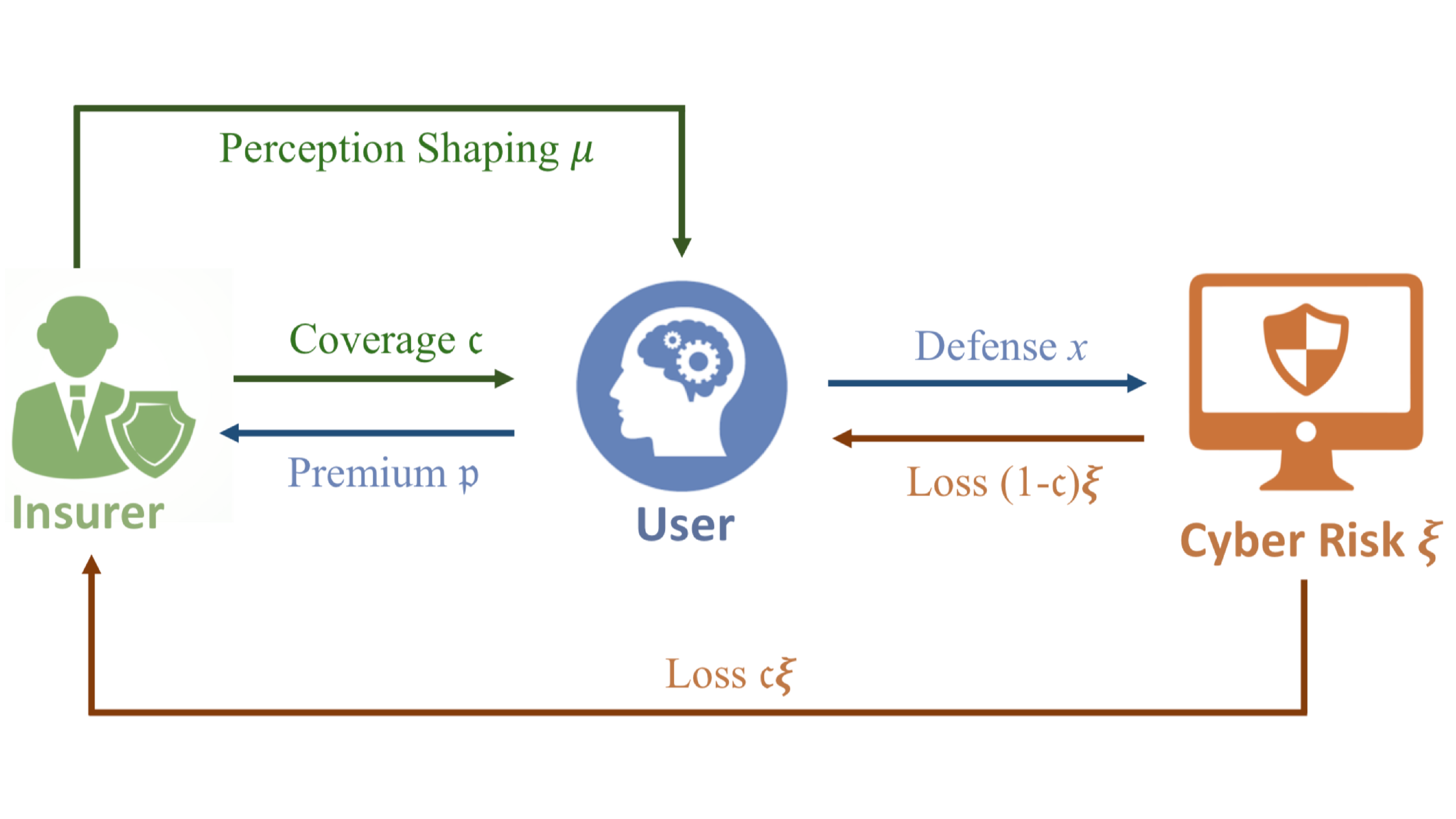}
   % \rule{4cm}{3cm}
  %  \label{fig:exp}
%\subfigure[]{\includegraphics[width=0.236\textwidth]{longterm_opt.png}
   % \rule{4cm}{3cm}
   % \label{fig:subfig1.2}

\vspace{-5mm}\caption[Optional caption for list of figures]{Risk Preference Design Framework for Cyber Insurance:  An insurer aims to reshape the risk perception of the users and design  incentive-compatible cyber insurance contracts. The loss from the cyber risks is covered by the insurer after an active defense from the insured user. } %\subref{fig:subfig1} training, %\subref{fig:subfig2} and %\subref{fig:subfig3}
\label{fig:framework}
\end{figure}

This work aims to propose, as shown in Fig. \ref{fig:framework}, a risk preference design framework  to shape the users' risk preferences for high-performance cyber insurance.   
We consider a class of principle-agent problems in which one principal meets an idiosyncratic individual from a population of agents. 
Each agent has a risk preference type associated with an idiosyncratic risk measure. 
The goal of the design aims to optimize the profit of the insurer. 
We formulate the problem as a bilevel optimization problem. In the upper-level insurer’s problem, the insurer optimizes her profit by offering a contract and performing risk preference design subject to participation constraints. 
In the lower-level user’s problem, the user follows the claimed contract under the chosen risk preference.

Using the risk preference design framework, we characterize the solution to the insurer’s problem by imposing the conditions under which the optimal contract exhibits monotonicity.
This paves the way for considering linear contracts in practice.
We quantify the Intensity of Moral Hazard (IMH) by benchmarking the action of the user from solving his problem with the action obtained from the full-information counterpart of the insurer’s problem.
Leveraging the optimality conditions of the contract problems, we quantify how IMH varies according to the change in the risk preference type distributions.
This analysis enables the risk preference design for mitigating moral hazard and creates a revenue-compatible design for the principal. 
In the case study, we use linear contracts as an example to show that risk preference design reduces moral hazard and improves security.

Section \ref{sec:related works} provides a brief literature review. 
Section \ref{sec:problem} introduces risk preference types and formulates the design problem.
Section \ref{sec:analysis} analyzes the optimal contract and the proposed approach to quantify moral hazard. 
Section \ref{sec:case} contains a case study.
Section \ref{sec:conclusion} concludes the paper.

\section{Related Works}
\label{sec:related works}
One of the classic approaches to model risk perception relies on 
the expected utility theory of \cite{morgenstern1953theory}, where decisions of the players or the agents are assumed to maximize their utility.
The concept is extended in \cite{arrow1971theory} and \cite{pratt1978risk} to the well-known Arrow-Pratt measure of absolute risk aversion.
Another related approach is the cumulative prospect theory of \cite{tversky1992advances}, where the probability of the random outcomes is weighted.
%  to capture richer behavioral patterns of decisions under uncertainty.  
Motivated by the cumulative prospect theory,  %we distinguish the loss aversion from the utility function and the risk aversion which captures the quantification of randomness.
we use risk measures to capture individual risk preferences.
The seminal work \cite{artzner1999coherent} has proposed to adopt coherent risk measures (CRMs) for risk quantification.
The axioms that a CRM satisfies not only make it meaningful in practice but also enables the dual representation \cite{artzner1999coherent, ruszczynski2006optimization}.
This representation shows the robustness of a CRM and connects it with the popular distributionally robust optimization \cite{shapiro2017distributionally}.
%This representation transforms a CRM to an optimization problem searching for the worst-case density function under the given random loss, showing that CRMs are robust.
%There is also a vast literature on distributionally robust optimization \cite{wiesemann2014distributionally,shapiro2017distributionally}, which has close relation to the dual of CRMs.
%We refer the readers to  and the references therein.

Risk design is closely related to system robustness \cite{liu2020robust}.
Risk design in cyber insurance elicits human behaviors that have an impact on cyber security and resilience.
In \cite{khalili2018designing}, the authors have proposed a pre-screening method 
%for decreasing the uncertainty about the outcome of cyber losses. The method has been used 
to increase network security by designing incentive contract which increases users' effort levels.
In \cite{kuru2017effect}, the authors have
%focused on the effect of cyber insurance on social welfare when the system faces cyber risks.
have shown that an increased level of risk aversion increases the social welfare gain from using cyber insurance for protection.
This work aims to consolidate the design of risk preferences of users into cyber insurance to further mitigate cyber risks.

\section{Problem Formulation}
\label{sec:problem}
In this section, we introduce the framework of cyber insurance with risk preference design.
We  present the concept of risk preference types to model different perceptions of risks in connection with a class of principal-agent problems.
%The types enriches the crowd of users in CPS for the purpose of defending against various kinds of cyber risks \cite{zhu2018cyber}.
%incorporate the types into a class of principal-agent problems, enabling the design.

\subsection{Risk preference types}
\label{sec:problem:risk}
Consider a mass $1$ population of users of networked systems.
This population faces uncertainties that arise from cyber risks captured by the probability space $(\Xi, \mathcal{F})$ with the reference probability measure $P$.
We use $\xi\in\Xi\subset\mathbb{R}$ to denote a sampled outcome.
We denote by $Z:\Xi\rightarrow \mathbb{R}$ a random loss, where 
$Z$ is a measurable function with finite $p$-th order moment from the space $\mathcal{Z} :=\mathcal{L}_p(\Xi, \mathcal{F}, P)$.
The parameter $p$ lives in $[1,+\infty)$.
Individuals in this population possess different risk preferences toward uncertainty.
In particular, each individual is identified with a risk preference type $\theta\in\Theta\subset\mathbb{R}$.
A risk preference of type $\theta$ determines the way user's perception of risks, and it is captured by a risk measure $\rho_\theta:\mathcal{Z}\rightarrow \mathbb{R}$.%, which influences the user's quantification of randomness.

Let $(\Theta, \mathcal{G})$ denote the underlying probability space of risk preference types.
The risk preference types in the population are distributed according to a probability measure $\mu^0$ from the set of probability measures $\mathcal{Q}$ defined on $(\Theta, \mathcal{G})$.
Note that, in practice, $\mu^0$ can be known by collecting information through questionnaires or analyzing historical customer behaviors.

The goal of risk preference design is to find a distribution $\mu\in\mathcal{Q}$ so that incentive contracts can be created between the users and the insurer to cover potential losses due to users' cyber risks. 
We consider designing the distribution of risk preferences owing to the following three reasons.
First, the utility of the insurer is measured by the outcome of a population of users participating in the contract. Hence the expected utility captures the incentives of the insurer. Second, it is difficult to design customized contracts as precise individual user's risk perception is hard to measure. 
%Since every individual has her own characteristics in mental an psychological state, which are difficult to measure. 
The collective pattern of the users of the same type can be statistically measured. % caused by exogenous inputs can be apparent.
Third, this distributional perspective can be endowed with an interpretation that the insurer and the same user interact repeatedly for a sufficiently long period of time. % meeting users repeatedly.
We will elaborate this point further in Section \ref{sec:problem:contract problem}.

We assume that the risk measures $\rho_\theta$ for all $\theta\in\Theta$ are CRMs \cite{artzner1999coherent}. 
Below is the definition of coherent risk measures. %with the details of the axioms omitted for. 
%{\bf WE NEED TO BRING BACK THE DEFINITIONS AND MAKE THE PAPER SELF-CONTAINED.}
\begin{definition}[Coherent Risk Measures \cite{artzner1999coherent}]
A function $\rho:\mathcal{Z}\rightarrow\mathbb{R}$ is called a Coherent Risk Measure if it satisfies
\\
(A1) Monotonicity: $\rho(Z)\geq \rho(Z')$ if $Z, Z'\in\mathcal{Z}$ and $Z\succeq Z'$.\\
(A2) Convexity: 
    $\rho(tZ+(1-t)Z') \leq t\rho(Z)+(1-t)\rho(Z')$
for all $Z, Z'\in\mathcal{Z}$ and $t\in[0,1]$.\\
(A3) Translation equivariance: If $Z\in\mathcal{Z}$ and $a\in\mathbb{R}$, then $\rho(Z+a)=\rho(Z)+a$.\\
(A4) Positive homogeneity: $\rho(tZ)=t\rho(Z)$ if $Z\in\mathcal{Z}$ and $t\in\mathbb{R}_+$.
\end{definition}

\begin{comment}
\begin{definition}
A function $\rho:\mathcal{Z}\rightarrow\mathbb{R}$ is called a CRM if it satisfies the following axioms: \\
\hspace*{1em}(A1) Monotonicity: If $Z, Z'\in\mathcal{Z}$ and $Z\succeq Z'$, then $\rho(Z)\geq \rho(Z')$.\\
\hspace*{1em}(A2) Convexity: 
    $\rho(tZ+(1-t)Z') \leq t\rho(Z)+(1-t)\rho(Z')$
for all $Z, Z'\in\mathcal{Z}$ and $t\in[0,1]$.\\
\hspace*{1em}(A3) Translation equivariance: If $Z\in\mathcal{Z}$ and $a\in\mathbb{R}$, then $\rho(Z+a)=\rho(Z)+a$.\\
\hspace*{1em}(A4) Positive homogeneity: If $Z\in\mathcal{Z}$ and $t\in\mathbb{R}_+$, then $\rho(tZ)=t\rho(Z)$.
\end{definition}
The relation $Z\succeq Z'$ means $Z(\xi)\geq Z'(\xi)$ for a.e. $\xi\in\Xi$.
\end{comment}

One of the most significant consequences of CRMs is their dual representation \cite{artzner1999coherent,ruszczynski2006optimization}.
Let $\mathcal{Z}^*:=\mathcal{L}_q(\Xi, \mathcal{F}, P)$ for $q\in (1,\infty]$ denote the dual space of $\mathcal{Z}$, \ie, $\frac{1}{p}+\frac{1}{q}=1$.
Following \cite{artzner1999coherent,ruszczynski2006optimization}, the dual representation of $\rho_\theta$ is
\small
\begin{equation}
    \rho_\theta[Z(\xi)]=\sup_{\zeta \in \mathfrak{A}_\theta}
    \int_{\Xi}Z(\xi)\zeta(\xi)dP(\xi,x),
    %\mathbb{E}_{\xi \sim \nu}[f(x,\xi)],
    \label{eq:dual representation of individual risk}
\end{equation}
\normalsize
where $\mathfrak{A}_\theta \subset \mathcal{Z}^*$ denotes the dual set associated with the risk measure $\rho_\theta$.
The set $\mathfrak{A}_\theta$ is convex and compact when $p\in[1,+\infty)$ \cite{shapiro2021lectures}. 
Hence, the maximum of (\ref{eq:dual representation of individual risk}) is attained.

\subsection{Cyber Insurance with Risk Preference Design}
\label{sec:problem:contract problem}
We incorporate the risk preference types discussed in Section \ref{sec:problem:risk} into a principal-agent problem by specifying the following action-outcome relation.
Consider a principal-agent problem with one principal and one agent.
Let $x\in X$ denote the agent's action, which represents the amount of security investment that he spends on his device.
We set $\Xi:=[\underline{\xi},\Bar{\xi}]$ and identify it as the set of random cyber risks which affect the costs of both the principal and the agent.
The parameterized probability measure $P(\xi, x)$ describes the stochastic relation between the cyber risk and the security investment.
We assume that for all $x\in X$, $P(\xi, x)$ is absolutely continuous with the reference probability measure $P(\xi)$, \ie, there is a density function $p(\xi, x)$ such that $dP(\xi, x)=p(\xi, x)dP(\xi)$.
And we assume that $p(\cdot, x)$ is differentiable for all $x\in X$ and $p(\xi, \cdot)$ is continuously differentiable for all $\xi\in\Xi$.

The insurer acts as the principal in the problem. 
She has the power of claiming a coverage plan of the cyber losses described by the function $w(\cdot):\Xi\rightarrow\Xi$ and choosing a distribution of risk preference types $\mu\in\mathcal{Q}$ for the users.
For a given cyber risk $\xi$, she observes cost (or negative profit) $w(\xi)+\xi$.
We assume that the insurer is risk-neutral.
Hence, her expected cost from claiming a plan $w(\cdot)$ is
$
    \int_{\Xi}w(\xi)+\xi dP(\xi,x).
    \label{eq:cost of plan}
$
The manipulation of risk preferences can come with a cost.
We introduce the order-$1$ Wasserstein distance \cite{villani2009optimal} $W_1(\mu, \mu^0)$
to represent the cost associated with the effort to change the risk preference distribution from $\mu^0$ to $\mu$.
The cost of the insurer in designing $w(\cdot)$ and $\mu$ is
\small
\begin{equation}
    \int_{\Xi}w(\xi)+\xi dP(\xi,x)+\gamma W_1(\mu, \mu^0),
    \label{eq:principal cost}
\end{equation}
\normalsize
where $\gamma>0$ represents the monetary cost of risk manipulation.

Consider a user with type $\theta_i$. 
If he takes action $x_i$ under plan $w(\cdot)$, the cyber risk that he perceives is $\rho_{\theta_i}[U(w(\xi), x_i)]$ with the random cost $U\in\mathcal{Z}$.
We assume that $U$ is increasing and convex in $w$, which indicates that the users exhibit loss aversion.
Note that since $U$ can be understood as a disutility function, our assumptions on $U$ are consistent with the assumptions that a utility function is non-decreasing and concave.
In the literature, these assumptions are often referred to as 'risk aversion' \cite{grossman1992analysis}.
We use the term loss aversion to distinguish it from risk preferences represented by the risk measures $\rho_\theta$.
Our treatment of risk is motivated by \cite{tversky1992advances}, where the authors combine the loss aversion and the probability weighting to characterize decisions under risk.

The agent is assume to be an idiosyncratic individual from the population of users who bears the averaged action $x$.
When the type distribution is $\mu$, the cost of the agent is
\small
\begin{equation}
    \int_{\Theta}\rho_\theta[U(w(\xi),x)]d\mu(\theta).
    \label{eq:agent cost}
\end{equation}
\normalsize
Note that in (\ref{eq:agent cost}), the randomness is addressed by the risk measures $\rho_\theta$; the integral over $\Theta$ represents an averaging over all the risk preference which generates the averaged action $x$.

Let $\Bar{U}>0$ denote the deterministic cyber risk threshold that the agent can bear.
The hidden action cyber insurance problem with risk-preference design is formulated as follows:
\small
\begin{equation}
    \begin{aligned}
         \min_{w(\cdot), \mu(\cdot), x}
         & \int_{\Xi}w(\xi)+\xi dP(\xi,x)+\gamma W_1(\mu, \mu^0) \\
         \text{s.t. } \ \ 
         &\int_{\Theta}\rho_\theta[U(w(\xi),x)]d\mu(\theta)\leq \Bar{U}, \text{(IR)}, \\
         & x\in \argmin_{x'\in X} \int_\Theta \rho_\theta [U(w(\xi),x')]d\mu(\theta), \text{(IC)}. \\
         \label{eq:contract problem}
    \end{aligned}
\end{equation}
\normalsize
In (\ref{eq:contract problem}), (IR) is the individual rationality constraint guaranteeing that participation benefits the agent; (IC) is the incentive compatibility constraint stating that the agent should minimize his own cost when his action is hidden.

The main difference between (\ref{eq:contract problem}) and the classical principal-agent problems is that the distribution $\mu$ characterizing the distribution of risk preferences of the users becomes part of the design, offering a new degree of freedom to model the informational influences on the users and create additional control to elicit proper cyber hygiene of the users and, at the meantime, improve the social welfare and the profit.
Furthermore, as we will show in Section \ref{sec:analysis}, the risk preference design mitigates the moral hazard issue.

\section{Analysis of the Design Framework}
\label{sec:analysis}

\subsection{Characterization of Optimal Contracts}
\label{sec:analysis:characterization}
In this section, we use the first-order approach to characterize several properties of the optimal coverage plan for the contract problems defined in Section \ref{sec:problem}. %\cite{holmstrom1987aggregation}. 
We obtain conditions for the monotonicity of $w(\cdot)$.
%We use the first-order approach to characterize the optimal coverage plan.

Consider the first-order optimality condition of (IC):
\small
\begin{equation}
    \frac{\partial}{\partial x}\int_{\Theta}\rho_\theta[U(w(\xi),x)]d\mu(\theta)= 0.
    \label{eq:foc of IC 1}
\end{equation}
\normalsize
We assume that the set of density functions which attains the maximum of (\ref{eq:dual representation of individual risk}) is a singleton for all $\theta\in\Theta$, \ie, 
$\partial \rho_\theta[U(w(\xi),x)]=\{ \Bar{\zeta}_\theta(\xi, x) \}$.
Then, (\ref{eq:foc of IC 1}) leads to
\small
\begin{equation}
    \begin{aligned}
    0
    & =\frac{\partial}{\partial x}\int_\Theta \int_\Xi U(w(\xi),x)\Bar{\zeta}_\theta(\xi, x)p(\xi, x)dP(\xi)d\mu(\theta) \\
    &=\frac{\partial}{\partial x}\int_{\Xi}\left( \int_{\Theta}\Bar{\zeta}_\theta(\xi, x)d\mu(\theta) \right)U(w(\xi), x)p(\xi,x)dP(\xi)
    .
    \label{eq:foc if IC 2}
    \end{aligned}
\end{equation}
\normalsize
Ignoring the second-order optimality condition of (IC), we find (\ref{eq:foc if IC 2}) as an alternative for the (IC) in (\ref{eq:contract problem}).

The Lagrangian of problem (\ref{eq:contract problem}) when (IC) is substituted by the first-order condition (\ref{eq:foc if IC 2}) can be expressed as the following:
\small
\begin{equation}
    \begin{aligned}
         &\mathbb{L}=  \int_\Xi w(\xi)+\xi dP(\xi, x)+\gamma W_1(\mu, \mu^0) \\
         & \ \ + \alpha\int_\Theta \rho_\theta[U(w(\xi),x)]d\mu(\theta)-\Bar{U} \\
         & \ \ + \beta \frac{\partial}{\partial x} \int_\Theta \mathbb{E}_{\mu}[\Bar{\zeta}_\theta(\xi, x)] U(w(\xi),x)p(\xi, x)dP(\xi)d\mu(\theta). 
         \label{eq:lagrangian of contract}
    \end{aligned}
\end{equation}
\normalsize
Define $\pi(\xi, \theta, x)=\Bar{\zeta}_\theta(\xi, x)p(\xi, x)$.
Minimizing $w(\xi)$ pointwise in $\xi$, we obtain the first-order condition of (\ref{eq:lagrangian of contract}) for $\xi\in\Xi$:
\small
\begin{equation}
\begin{aligned}
    & -p(\xi, x)= \mathbb{E}_\mu \bigg\{ \alpha\pi(\xi,\theta,x)\frac{\partial}{\partial w}U(w(\xi), x) \\
    & \quad +\beta \frac{\partial}{\partial w}\left( \pi(\xi, \theta, x)\frac{\partial}{\partial x}U(w(\xi), x)+U(w(\xi), x)\frac{\partial}{\partial x}\pi(\xi, \theta, x) \right) \bigg\} .
    \label{eq:foc of lagrangian 1}
\end{aligned}
\end{equation}
\normalsize
\begin{comment}
\begin{equation}
\begin{aligned}
    & -p(\xi, x)= \mathbb{E}_\mu \left\{ \alpha\pi(\xi,\theta,x)\frac{\partial}{\partial w}U(w(\xi), x) \right\\
    &\left \quad +\beta \frac{\partial}{\partial w}\left( \pi(\xi, \theta, x)\frac{\partial}{\partial x}U(w(\xi), x)+U(w(\xi), x)\frac{\partial}{\partial x}\pi(\xi, \theta, x) \right) \right\} .
    \label{eq:foc of lagrangian 1}
\end{aligned}
\end{equation}
&=\mathbb{E}_{\mu}\left\{  \alpha \frac{\partial}{\partial w}\left[U(w(\xi),x) \pi(\xi, \theta, x)\right] +\beta\frac{\partial^2}{\partial x \partial w} \left[ U(w(\xi),x)\pi(\xi, \theta, x) \right] \right\} \\
\end{comment}
Before we proceed, we make the assumption that 
$\frac{\partial^2}{\partial x \partial w}U(w(\xi), x)=0$.
Note that this assumption is weaker than the one that the agent's cost is separable in the
security investment and the gain from the coverage plan in the cyber insurance contract.
After several steps of algebraic manipulations, (\ref{eq:foc of lagrangian 1}) leads to
\small
\begin{equation}
    \begin{aligned}
        &\qquad -p(\xi, x) \\
        & = \mathbb{E}_\mu \left\{ \alpha\pi(\xi,\theta,x)\frac{\partial}{\partial w}U(w(\xi), x)+\beta \frac{\partial}{\partial w}U(w(\xi), x)\frac{\partial}{\partial x}\pi(\xi, \theta, x)\right\} \\
        &= \frac{\partial}{\partial w}U(w(\xi), x)\cdot\mathbb{E}_\mu \left\{ \alpha\pi(\xi,\theta,x)+\beta \frac{\partial}{\partial x}\pi(\xi, \theta, x) \right\}.
        \label{eq:foc of lagrangian 2}
    \end{aligned}
\end{equation}
\normalsize
Dividing both sides of (\ref{eq:foc of lagrangian 2}) by $p(\xi, x)\frac{\partial}{\partial w}U(w(\xi), x)$, we obtain
\small
\begin{equation}
\begin{aligned}
    &\qquad \frac{-1}{\frac{\partial}{\partial w}U(w(\xi), x)} \\
    &=\mathbb{E}_\mu
    \left\{ \alpha \Bar{\zeta}_\theta(\xi, x)+\beta \left( \frac{\partial}{\partial x}\Bar{\zeta}_\theta(\xi, x) +\Bar{\zeta}_\theta(\xi, x)\frac{\frac{\partial}{\partial x}p(\xi, x)}{p(\xi, x)} \right)
     \right\}\\
    &= \mathbb{E}_\mu
    \left\{ \Bar{\zeta}_\theta(\xi, x) \right\}\cdot\left(\alpha+\beta \frac{\frac{\partial}{\partial x}p(\xi, x)}{p(\xi, x)} \right) 
    +\beta \mathbb{E}_\mu
    \left\{\frac{\partial}{\partial x} \Bar{\zeta}_\theta(\xi, x) \right\}.
     \label{eq:foc of lagrangian 3}
     \end{aligned}
\end{equation}
\normalsize
Differentiating with respect to $\xi$ over $\Bar{\Xi}$ on both sides of (\ref{eq:foc of lagrangian 3}), we obtain
\small
\begin{equation}
    \begin{aligned}
         & \frac{w'(\xi)\cdot\frac{\partial^2}{\partial w^2}U(w(\xi), x)}{\left(\frac{\partial}{\partial w}U(w(\xi), x)\right)^2}=\beta \frac{\partial}{\partial \xi }\mathbb{E}_\mu
    \left\{ \frac{\partial}{\partial x}\Bar{\zeta}_\theta(\xi, x) \right\} \\
    & \qquad + \frac{\partial}{\partial \xi}
    \left\{ \mathbb{E}_\mu
    [ \Bar{\zeta}_\theta(\xi, x) ]\cdot\left(\alpha+\beta \frac{\frac{\partial}{\partial x}p(\xi, x)}{p(\xi, x)} \right)   \right\}.
     \label{eq:foc of lagrangian 4}
    \end{aligned}
\end{equation}
\normalsize

\begin{comment}
\small
\begin{equation}
\begin{aligned}
     \frac{w'(\xi)\cdot\frac{\partial^2}{\partial w^2}U(w(\xi), x)}{\left(\frac{\partial}{\partial w}U(w(\xi), x)\right)^2}
    &= \frac{\partial}{\partial \xi}\mathbb{E}_\mu[\Bar{\zeta}_\theta(\xi, x)] \cdot \left( \alpha+\beta\frac{\frac{\partial}{\partial x}p(\xi, x)}{p(\xi, x)} \right) \\
    & \ \ +\mathbb{E}_\mu[\Bar{\zeta}_\theta(\xi, x)] \cdot \left[ \alpha+\beta\frac{\partial}{\partial \xi} \left( \frac{\frac{\partial}{\partial x}p(\xi, x)}{p(\xi, x)} \right) \right] \\
    & \ \ +\beta \frac{\partial}{\partial \xi }\mathbb{E}_\mu
    \left\{ \frac{\partial}{\partial x}\Bar{\zeta}_\theta(\xi, x) \right\}.
    \label{eq:foc of lagrangian 4}
    \end{aligned}
\end{equation}
\normalsize
\end{comment}

The above discussions lead to the following result, which characterizes the monotonicity of the optimal coverage plan.

\begin{proposition}
\label{prop:monotonicity}
Suppose that the density function $\Bar{\zeta}_{\theta}(\cdot,x)$ is differentiable on a subset $\Bar{\Xi}\subset\Xi$.
Then, the optimal coverage plan $w$ of (\ref{eq:contract problem}) is increasing in the cyber risk on $\Bar{\Xi}$, \ie, $w'(\xi)\geq 0$, if the following conditions hold:\\
\hspace*{1em}
(C1) The functions $p(\xi, x)$ and $\Bar{\zeta}_\theta(\xi, x)$ satisfy $\frac{\partial}{\partial x}\int_{\Xi}p(\xi, x)d\xi>0$ and $\frac{\partial}{\partial x}\int_{\Xi}\Bar{\zeta}_{\theta}(\xi, x)d\xi>0$, respectively;\\
\hspace*{1em}
(C2) Severe loss avoidance:
\small
\begin{equation*}
     \frac{\partial}{\partial \xi }\mathbb{E}_\mu
    \left\{ \frac{\partial}{\partial x}\Bar{\zeta}_\theta(\xi, x) \right\} \geq 0;
\end{equation*}
\normalsize
\hspace*{1em}(C3) Risk-sensitive monotone likelihood ratio property:
\small
\begin{equation*}
    \frac{\partial}{\partial \xi}
    \left\{ \mathbb{E}_\mu
    [ \Bar{\zeta}_\theta(\xi, x) ]\cdot\left(\alpha+\beta \frac{\frac{\partial}{\partial x}p(\xi, x)}{p(\xi, x)} \right)   \right\} \geq 0.
\end{equation*}
\normalsize
\end{proposition}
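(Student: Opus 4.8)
The plan is to read the sign of $w'(\xi)$ directly off the stationarity identity (\ref{eq:foc of lagrangian 4}), which has already been arranged so that a $w'(\xi)$-proportional quantity sits alone on the left. Writing that identity schematically as
\begin{equation*}
\frac{\frac{\partial^2}{\partial w^2}U(w(\xi),x)}{\left(\frac{\partial}{\partial w}U(w(\xi),x)\right)^2}\, w'(\xi) = (\mathrm{I}) + (\mathrm{II}),
\end{equation*}
where $(\mathrm{I})$ and $(\mathrm{II})$ are the first and second right-hand terms of (\ref{eq:foc of lagrangian 4}), the strategy is to show that the scalar coefficient multiplying $w'(\xi)$ is strictly positive and that $(\mathrm{I})+(\mathrm{II})\ge 0$ on $\Bar{\Xi}$; dividing then gives $w'(\xi)\ge 0$ pointwise. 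Since the derivation of (\ref{eq:foc of lagrangian 4}) from the Lagrangian (\ref{eq:lagrangian of contract}) is already carried out in the text, the proof proper is essentially this sign analysis, together with a justification of the ingredients that produced (\ref{eq:foc of lagrangian 4}).

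For the coefficient I would invoke the standing loss-aversion hypothesis that $U$ is increasing and convex in $w$, giving $\frac{\partial}{\partial w}U>0$, hence $(\frac{\partial}{\partial w}U)^2>0$, and $\frac{\partial^2}{\partial w^2}U\ge 0$. To actually \emph{divide} by the coefficient I need it strictly positive, i.e. strict convexity $\frac{\partial^2}{\partial w^2}U>0$; I expect this to be one delicate point, since under mere convexity the identity degenerates wherever $\frac{\partial^2}{\partial w^2}U=0$ and the left-hand side then carries no information about $w'(\xi)$. The statement should therefore be read with $U$ strictly convex in $w$, which is consistent with genuine loss aversion.

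The right-hand side is then immediate from the hypotheses: term $(\mathrm{I})=\beta\,\frac{\partial}{\partial\xi}\mathbb{E}_\mu\{\frac{\partial}{\partial x}\Bar{\zeta}_\theta(\xi,x)\}\ge 0$ is exactly the severe-loss-avoidance condition (C2), and term $(\mathrm{II})=\frac{\partial}{\partial\xi}\{\mathbb{E}_\mu[\Bar{\zeta}_\theta(\xi,x)](\alpha+\beta\frac{\partial_x p}{p})\}\ge 0$ is exactly the risk-sensitive monotone-likelihood-ratio condition (C3). Adding these, $(\mathrm{I})+(\mathrm{II})\ge 0$, and combining with the strictly positive coefficient from the previous paragraph yields $w'(\xi)\ge 0$ on $\Bar{\Xi}$, which is the assertion.

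What remains — and what I regard as the genuine obstacle, rather than the sign-chase above — is to justify the objects feeding (\ref{eq:foc of lagrangian 4}), and this is where (C1) enters. Two points need care. First, the first-order approach: replacing the (IC) $\argmin$ in (\ref{eq:contract problem}) by its stationarity condition (\ref{eq:foc if IC 2}) is legitimate only if that stationary point is the agent's true minimizer, so one must assume or verify a second-order condition for the lower-level problem (the excerpt explicitly \emph{ignores} it). Second, the multipliers $\alpha,\beta$ attached to (IR) and to the (IC) first-order condition must exist with the signs implicitly used when (C2)--(C3) are read as sign conditions; I would obtain these from the KKT conditions of (\ref{eq:lagrangian of contract}) under a constraint qualification, and this is precisely the role I would assign to (C1): the monotonicity of $x\mapsto\int_\Xi p(\xi,x)\,d\xi$ and $x\mapsto\int_\Xi\Bar{\zeta}_\theta(\xi,x)\,d\xi$ forces the agent's action to move the relevant integrated quantities in a fixed direction, pinning down the signs of $\alpha$ and $\beta$ via the integrated form of (\ref{eq:foc of lagrangian 3}). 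Establishing these structural facts rigorously, rather than the final division, is the part I expect to require the most work.
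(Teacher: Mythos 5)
Your proposal follows essentially the same route as the paper's own proof: the paper likewise reads $w'(\xi)\ge 0$ off the sign of (\ref{eq:foc of lagrangian 4}), using (C2) and (C3) to make the right-hand side nonnegative and deriving $\beta>0$ from (C1) together with the convexity of $U$ via a Holmstr\"om-type argument, exactly the role you assign to (C1). Your two flagged subtleties---that one needs strict convexity $\frac{\partial^2}{\partial w^2}U(w(\xi),x)>0$ to divide and conclude anything about $w'(\xi)$, and that the validity of the first-order approach to (IC) is assumed rather than verified---are correct refinements of points the paper passes over silently.
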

\begin{proof}
By combining (C1) with the convexity of $U$ with respect to $w$, we can show $\beta>0$ using the similar arguments as in \cite{holmstrom1979moral}. 
When (C2) and (C3) hold, the optimality condition (\ref{eq:foc of lagrangian 4}) leads to the monotonicity of $w$.
\end{proof}

The conditions in Proposition \ref{prop:monotonicity} have the following interpretations. 
Condition (C1) implies two facts. 
First, without taking the user's risk preference into account,
the true likelihood for higher cyber risks to occur decreases when the security investment increases.
Second, the perceived likelihood for higher cyber risks to occur of a user with risk preference type $\theta$ is decreased when the security investment increases.
These two facts are consistent within the cyber insurance context.
Before we elaborate on condition (C2), we first observe the following fact.
Since the coverage plan $w$ is increasing in the cyber risk $\xi$ and the cost $U$ is increasing in $w$, we conclude that $U$ is also increasing in $\xi$.
In addition, from (\ref{eq:dual representation of individual risk}), we know that for all $\theta$ the optimal density function $\Bar{\zeta}_{\theta}(\xi, x)$ is increasing in $\xi$, \ie, $\frac{\partial}{\partial \xi}\Bar{\zeta}_{\theta}(\xi, x)>0$. 
Then, condition (C2) means that an increase in the user's security investment shows his panic towards the potential occurrences of severer cyber risks. 
In other words, the user tries to avoid severe losses.
Condition (C3) is the risk-sensitive counterpart of the monotone likelihood ratio property:
\small
\begin{equation}
    \frac{\partial}{\partial \xi}\left( \frac{\frac{\partial}{\partial x}p(\xi, x)}{p(\xi, x)} \right)\geq 0.
    \label{eq:monotone likelihood}
\end{equation}
\normalsize

Note that the conditions for proving monotonicity on $\Bar{\Xi}$ can reduce to simply (C1) and the standard monotone likelihood ratio property for some commonly used CRMs, such as the mean semideviation measure and the average value-at-risk measure \cite{ruszczynski2006optimization}.
The reason lies in that the density function $\Bar{\zeta}_\theta(\xi, x)$ of these two measures is piecewise constant. 
It means that the right-hand side of (\ref{eq:foc of lagrangian 4}) becomes 
\small
\begin{equation*}
 \mathbb{E}_\mu
    [ \Bar{\zeta}_\theta(\xi, x) ]\cdot\frac{\partial}{\partial \xi}\left(\alpha+\beta \frac{\frac{\partial}{\partial x}p(\xi, x)}{p(\xi, x)} \right).
\end{equation*}
\normalsize
Then, we can recover (\ref{eq:monotone likelihood}).

\subsection{Mitigation of moral hazard}
\label{sec:analysis:mitigation}
It is well-known from the literature that one of the causes of moral hazard is the issue of hidden action; i.e., the insurer cannot observe the action taken by the insured.
In this section, we propose to measure the IMH and provide a method to mitigate moral hazard.
For notational simplicity, we consider an $n$-dimensional finite risk preference type space $\Theta=\{\theta_1,\theta_2,...,\theta_n\}$ with probability measure $\mu=(\mu_1,\mu_2,...,\mu_n)^T$.

\paragraph{Measuring the IMH}
We first consider the following full-information benchmark  associated with (\ref{eq:contract problem}) assuming that the risk preference distribution $\mu$ is given:
\small
\begin{equation}
    \begin{aligned}
         \min_{w(\cdot), x}
         & \int_{\Xi}w(\xi)+\xi dP(\xi,x)+\gamma W_1(\mu, \mu^0) \\
         \text{s.t. } \ \ 
         &\sum_{i=1}^n \rho_{\theta_i}[U(w(\xi), x)]\cdot \mu_i \leq \Bar{U}, \text{(IR)}. 
         \label{eq:full info benchmark}
    \end{aligned}
\end{equation}
\normalsize
Let $w^*$ and $x^*$ denote the optimal solution of (\ref{eq:full info benchmark}) given $\mu$.
The Lagrangian of (\ref{eq:full info benchmark}) is
\small
\begin{equation}
\begin{aligned}
    \mathbb{L}_b&=\int_{\Xi}(w(\xi)+\xi)p(\xi, x)dP(\xi)+\gamma W_1(\mu, \mu^0) \\
    & \ \ +\alpha\left( \sum_{i=1}^n \rho_{\theta_i}[U(w(\xi), x)]\cdot \mu_i-\Bar{U} \right).
    \label{eq:lagrangian of benchmark}
\end{aligned}
\end{equation}
\normalsize
Then, the action $x^*$ solves
\small
\begin{equation}
\begin{aligned}
    0=\frac{\partial}{\partial x}\mathbb{L}_b
    &=\frac{\partial}{\partial x}\int_{\Xi}(w^*(\xi)+\xi)p(\xi, x)dP(\xi) \\
    & \ \ +\alpha \frac{\partial}{\partial x}\sum_{i=1}^n \rho_{\theta_i}[U(w^*(\xi), x)]\cdot \mu_i.
    \label{eq:FOC bench 1}
\end{aligned}
\end{equation}
\normalsize
Hidden actions refer to the fact that 
$x^*$ will not be the real action taken by the agent after receiving $w^*$ announced by the principal.
From the agent's perspective, an action $x^a$ optimizing his own objective is superior, \ie, $x^a$ such that
\small
\begin{equation}
    x^a \in \argmin_{x\in X} \sum_{i=1}^n \rho_{\theta_i}[U(w^*(\xi), x)]\cdot \mu_i.
    \label{eq:agent problem}
\end{equation}
\normalsize
Therefore, the deviation from $x^*$ to $x^a$ is the result of the moral hazard issue and $x^*-x^a$ is a meaningful metric for measuring the IMH. 
The value of $x^*-x^a$ is nonnegative because of the assumptions made in Section \ref{sec:problem}.
\begin{definition}
\label{def:IMH}
We refer to the quantity $x^*-x^a$ as the IMH.
A high value of $x^*-x^a$ indicates an intense moral hazard.
\end{definition}

Note that (\ref{eq:agent problem}) is part of the IC constraint in (\ref{eq:contract problem}).
However, the actions solving (\ref{eq:contract problem}) and \ref{eq:agent problem} are different in general.

\paragraph{Mitigation of moral hazard through risk preference design}
We study how IMH varies according to the changes in $\mu$.
From (\ref{eq:agent problem}),  $x^a$ satisfies the following first-order condition:
\small
\begin{equation}
    \frac{\partial}{\partial x}\sum_{i=1}^n \rho_{\theta_i}[U(w^*(\xi), x)]\cdot \mu_i=0.
    \label{eq:foc of agent problem}
\end{equation}
\normalsize
Define  $H_1:X\times \mathbb{Q}\rightarrow \mathbb{R}$ and $H_2:X\times \mathbb{Q}\rightarrow \mathbb{R}$ according to
\small
\begin{equation}
    H_1(x, \mu)=\frac{\partial}{\partial x}\mathbb{L}_b,
\end{equation}
\normalsize
as in (\ref{eq:FOC bench 1}), and
\small
\begin{equation}
    H_2(x, \mu)= \frac{\partial}{\partial x}\sum_{i=1}^n \rho_{\theta_i}[U(w(\xi), x)]\cdot \mu_i.
\end{equation}
\normalsize
as in (\ref{eq:foc of agent problem}), respectively.
Then, we obtain from (\ref{eq:FOC bench 1}) and (\ref{eq:foc of agent problem}) that, for a given $\mu$, $H_1(x^*, \mu)=0$ and $H_2(x^a, \mu)=0$.

The next result states how to reduce IMH by designing the distribution of risk preference types.
\begin{proposition}
\label{prop:mitigate IMH}
Let $x^*$ and $x^a$ denote the isolated local minima of (\ref{eq:full info benchmark}) and (\ref{eq:agent problem}) given $\mu$, respectively.
% implies invertibility of Jacobians
Let $J_{H_1, x}$ denote the Jacobian matrix of $H_1$ with respect to the first argument; let $J_{H_2, x}$ denote the Jacobian matrix of $H_2$ with respect to the first argument.
Then, in a neighborhood $\mathcal{N}_\mu\subset \mathcal{Q}$ of $\mu$, the IMH can be reduced by changing $\mu$ to $\mu'=\mu+c\Delta\mu$ with $c>0$ sufficiently small, if $\Delta\mu$ satisfies
\small
\begin{equation}
    (\Delta\mu)^T\cdot\nabla T(\mu) \leq 0,
    \label{eq:condition of mitigating IMH}
\end{equation}
\normalsize
where $\nabla T(\mu)=(\frac{\partial T}{\partial \mu_1}(\mu), \frac{\partial T}{\partial \mu_2}(\mu),...,\frac{\partial T}{\partial \mu_n}(\mu))^T$ with 
\small
\begin{equation}
\begin{aligned}
    \frac{\partial T}{\partial \mu_i}(\mu)&=\left( J_{H_2,x}(x^a, \mu) \right)^{-1}\frac{\partial H_2}{\partial \mu_i}(x^a, \mu) \\
    & \  \ -
    \left( J_{H_1,x}(x^*, \mu) \right)^{-1}\frac{\partial H_1}{\partial \mu_i}(x^*, \mu)
    ,
\end{aligned}
\end{equation}
\normalsize
for all $i=1,2,...,n$.
\end{proposition}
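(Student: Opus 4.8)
The plan is to view the IMH as a scalar function $T(\mu):=x^*(\mu)-x^a(\mu)$ of the type distribution, compute its gradient by an implicit-function (sensitivity) argument built on the stationarity conditions already derived, and then read the sign condition off a first-order expansion.

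First I would note that the two optimal actions are defined \emph{implicitly} by the first-order conditions established above: $H_1(x^*,\mu)=0$ from (\ref{eq:FOC bench 1}) and $H_2(x^a,\mu)=0$ from (\ref{eq:foc of agent problem}). The hypothesis that $x^*$ and $x^a$ are \emph{isolated} local minima is precisely what activates the implicit function theorem: together with the accompanying second-order sufficiency conditions it guarantees that the Jacobians $J_{H_1,x}(x^*,\mu)=\tfrac{\partial^2}{\partial x^2}\mathbb{L}_b$ and $J_{H_2,x}(x^a,\mu)$ are nonsingular, so the inverses appearing in the statement exist. Under the differentiability assumptions on $p(\xi,\cdot)$, $U$, and the dual densities $\Bar{\zeta}_\theta$ imposed in Sections \ref{sec:problem} and \ref{sec:analysis:characterization}, the maps $H_1$ and $H_2$ are continuously differentiable, so the theorem produces $C^1$ solution branches $x^*(\cdot)$ and $x^a(\cdot)$ on a neighborhood $\mathcal{N}_\mu\subset\mathcal{Q}$ of $\mu$, which is the $\mathcal{N}_\mu$ in the statement.

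Next I would differentiate the identities $H_1(x^*(\mu),\mu)\equiv 0$ and $H_2(x^a(\mu),\mu)\equiv 0$ in $\mu_i$ and solve, obtaining
\[
\frac{\partial x^*}{\partial \mu_i}=-\left(J_{H_1,x}(x^*,\mu)\right)^{-1}\frac{\partial H_1}{\partial \mu_i}(x^*,\mu),\quad
\frac{\partial x^a}{\partial \mu_i}=-\left(J_{H_2,x}(x^a,\mu)\right)^{-1}\frac{\partial H_2}{\partial \mu_i}(x^a,\mu).
\]
Since $T=x^*-x^a$, subtracting these reproduces exactly the claimed expression for $\partial T/\partial\mu_i$, and hence for $\nabla T(\mu)$. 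The conclusion then follows from a first-order Taylor expansion along $\mu'=\mu+c\Delta\mu$: because $T$ is $C^1$ on $\mathcal{N}_\mu$, one has $T(\mu')=T(\mu)+c\,(\Delta\mu)^T\nabla T(\mu)+o(c)$, so whenever (\ref{eq:condition of mitigating IMH}) holds and $c>0$ is small enough the first-order term dominates and $T(\mu')\leq T(\mu)$, i.e. the IMH is not increased (and is strictly reduced when the inequality is strict).

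I expect the main obstacle to be the justification that the two Jacobians are nonsingular: an isolated minimizer need not have a nondegenerate Hessian, so this step genuinely relies on the second-order sufficiency assumption, without which the inverses written in the statement are undefined. Two further points need care: the admissible perturbations must keep $\mu'$ in the probability simplex, so $\Delta\mu$ should lie in the tangent space $\{v:\mathbf{1}^Tv=0\}$ and point into $\mathcal{Q}$; and in the borderline case $(\Delta\mu)^T\nabla T(\mu)=0$ the first-order term vanishes, so only ``not increased'' can be asserted unless one proceeds to second order.
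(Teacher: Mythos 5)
Your proposal is correct and follows essentially the same route as the paper's own proof: implicit function theorem applied to the stationarity conditions $H_1(x^*,\mu)=0$ and $H_2(x^a,\mu)=0$ to obtain differentiable branches $x^*(\mu)$, $x^a(\mu)$ and the stated sensitivity formulas, followed by a first-order Taylor expansion of $T(\mu)=x^*(\mu)-x^a(\mu)$ along $\mu'=\mu+c\Delta\mu$. Your closing caveats are in fact sharper than the paper's argument, which passes directly from ``isolated local minima'' to positive-definite Hessians and ignores both the simplex constraint on $\Delta\mu$ and the borderline case $(\Delta\mu)^T\nabla T(\mu)=0$.
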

\begin{proof}
Consider the optimization problems (\ref{eq:full info benchmark}) and (\ref{eq:agent problem}).
Since $x^*$ and $x^a$ are local minima, the first-order conditions hold, \ie, $H_1(x^*, \mu)=0$ and $H_2(x^a, \mu)=0$.
Since the local minima $x^*$ and $x^a$ are isolated, the second-order conditions require that $\frac{\partial^2}{\partial x^2}\mathbb{L}_b\succ 0$ and $\frac{\partial^2}{\partial x^2}\sum_{i=1}^n \rho_{\theta_i}[U(w(\xi), x)]\cdot \mu_i \succ 0$. 
Equivalently, we obtain $J_{H_1, x}(x^*, \mu)\succ0$ and $J_{H_2, x}(x^a, \mu)\succ 0$.
Invoking the implicit function theorem, we know that there exists differentiable functions $h_1(\cdot)$ and $h_2(\cdot)$ defined on neighborhoods $\mathcal{V}_{x^*}\times \mathcal{V}_{\mu}^1$ of $(x^*, \mu)$ and $\mathcal{V}_{x^a}\times \mathcal{V}_{\mu}^2$ of $(x^a, \mu)$, such that $h_1(\mu)=x^*$ and $h_2(\mu)=x^a$.
Furthermore, the derivatives of the functions $h_1(\cdot)$ and $h_2(\cdot)$ with respect to $\mu_i$ for all $ i=1,2,...,n$ are
\small
\begin{equation}
    \frac{\partial h_1}{\partial \mu_i}(\mu)=-\left( J_{H_1, x}(h_1(\mu), \mu) \right)^{-1} \frac{\partial H_1}{\partial \mu_i}(h_1(\mu), \mu),
    \label{eq:derivative of h1}
\end{equation}
\normalsize
and 
\small
\begin{equation}
    \frac{\partial h_2}{\partial \mu_i}(\mu)=-\left( J_{H_2, x}(h_2(\mu), \mu) \right)^{-1} \frac{\partial H_2}{\partial \mu_i}(h_2(\mu), \mu),
    \label{eq:derivative of h2}
\end{equation}
\normalsize
respectively.
Let $\mathcal{N}_\mu=\mathcal{V}_\mu^1\cap\mathcal{V}_\mu^2$.
Define $T(\mu)=h_1(\mu)-h_2(\mu)$ with support $\mathcal{N}_\mu$. 
Then, the quantity $x^*-x^a$ can be  represented using $T(\mu)$ for $\mu\in\mathcal{N}_\mu$.
The gradient $\nabla T(\mu)=( \frac{\partial T}{\partial \mu_i}(\mu),  \frac{\partial T}{\partial \mu_2}(\mu),..., \frac{\partial T}{\partial \mu_n}(\mu))^T$ follows from (\ref{eq:derivative of h1}) and (\ref{eq:derivative of h2}) for $\mu\in\mathcal{N}_\mu$.
Next, consider $T(\mu')$, which represents the IMH when the distribution of risk preference is $\mu'=\mu+c\Delta\mu$.
Taylor's theorem indicates that:
\small
\begin{equation*}
    T(\mu')=T(\mu)+c(\Delta\mu)^T\cdot \nabla T(\mu)+o(c).
\end{equation*}
\normalsize
For sufficiently small $c>0$, condition (\ref{eq:condition of mitigating IMH}) leads to $T(\mu')\leq T(\mu)$.
This completes the proof.
\end{proof}

Obtaining the value of IMH relies on solving problems (\ref{eq:agent problem}) and (\ref{eq:full info benchmark}). 
However, Proposition \ref{prop:mitigate IMH} offers a method to check whether a target distribution of risk preference types mitigates the effect of the moral hazard.

\paragraph{Design methods}
We have shown that the IMH can be mitigated by risk preference design.
Next, we show that the direction $\Delta\mu$ in \ref{eq:condition of mitigating IMH} can decrease the objective value of (\ref{eq:full info benchmark}), making $\mu'=\mu+c\Delta\mu$ a favorable design.

Before stating the results, we first present an equivalent representation of $W_1(\mu, \mu^0)$. 
Since we consider finite probability measures, strong duality holds in the Kantorovich-Rubinstein dual problem \cite{villani2009optimal} of $W_1(\mu, \mu^0)$, \ie,
\small
\begin{equation}
    W_1(\mu, \mu^0)= \max_{b\in B} \sum_{i=1}^n b_i\mu_i-\sum_{i=1}^n b_i\mu_i^0,
    \label{eq:kantorovich dual of W1}
\end{equation}
\normalsize
where $B=\{b\in\mathbb{R}^n: |b_i-b_j|\leq |i-j|, \forall i,j=1,2,...,n \}$.

\begin{proposition}
\label{prop:beneficial design}
For a given $\mu$, let $\nabla T(\mu)$ be defined as in Proposition \ref{prop:mitigate IMH}.
Let $b^*\in\mathbb{R}^n$ denote the solution to (\ref{eq:kantorovich dual of W1}).
Then, there exists $\Delta\mu$, such that by changing $\mu$ to $\mu'=\mu+c\Delta\mu$, the IMH and the objective cost of (\ref{eq:full info benchmark}) decrease, if there does not exist $a>0$, such that 
\small
\begin{equation}
    -ab^*=\nabla T(\mu).
    \label{eq:condition of beneficial design}
\end{equation}
\normalsize
\end{proposition}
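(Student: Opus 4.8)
The plan is to reduce the existence claim to a feasibility question about two strict linear inequalities in the search direction $\Delta\mu$, and then to resolve that feasibility by a theorem-of-the-alternative argument whose only obstruction is exactly condition (\ref{eq:condition of beneficial design}).

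First I would record the two first-order requirements that a good direction must meet. By Proposition \ref{prop:mitigate IMH}, moving from $\mu$ to $\mu' = \mu + c\Delta\mu$ with $c>0$ small decreases the IMH whenever $(\Delta\mu)^\tp \nabla T(\mu) \leq 0$; to obtain a strict decrease I would impose $(\Delta\mu)^\tp \nabla T(\mu) < 0$. For the cost, I would compute the sensitivity of the optimal value of (\ref{eq:full info benchmark}) with respect to $\mu$. Since $w^*$ and $x^*$ are optimal, the envelope theorem lets me differentiate the Lagrangian $\mathbb{L}_b$ in (\ref{eq:lagrangian of benchmark}) while holding $(w^*, x^*, \alpha)$ fixed, so the $\mu$-sensitivity is governed by the manipulation term $\gamma W_1(\mu, \mu^0)$. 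Applying Danskin's theorem to the Kantorovich-Rubinstein dual (\ref{eq:kantorovich dual of W1}) gives $\partial W_1(\mu, \mu^0)/\partial \mu_i = b_i^*$ at the unique maximizer $b^*$, so the cost decreases to first order along $\Delta\mu$ whenever $(\Delta\mu)^\tp b^* < 0$.

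With both conditions in hand, the claim becomes: there exists $\Delta\mu$ with $(\Delta\mu)^\tp \nabla T(\mu) < 0$ and $(\Delta\mu)^\tp b^* < 0$. I would view this as the intersection of two open half-spaces through the origin with normals $\nabla T(\mu)$ and $b^*$. Such an intersection is nonempty unless the two normals point in exactly opposite directions; formally, this is the two-row instance of Gordan's theorem, which states that the strict system is infeasible if and only if there exist $y_1, y_2 \geq 0$, not both zero, with $y_1 \nabla T(\mu) + y_2 b^* = 0$. Under the nondegeneracy $\nabla T(\mu) \neq 0$ and $b^* \neq 0$, this reduces to $\nabla T(\mu) = -a\, b^*$ for some $a>0$, i.e.\ exactly the negation of (\ref{eq:condition of beneficial design}). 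Thus, whenever (\ref{eq:condition of beneficial design}) holds a feasible $\Delta\mu$ exists, and a first-order Taylor expansion of both $T$ (as in Proposition \ref{prop:mitigate IMH}) and the optimal value of (\ref{eq:full info benchmark}) shows that for sufficiently small $c>0$ both quantities strictly decrease, which finishes the argument.

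The step I expect to be the main obstacle is making the cost-sensitivity computation rigorous. Differentiating the optimal value of (\ref{eq:full info benchmark}) requires a sensitivity/envelope result that accounts for the implicit dependence of $w^*$ and $x^*$ on $\mu$ (paralleling the implicit-function-theorem step used in Proposition \ref{prop:mitigate IMH}), together with Danskin's theorem for the nonsmooth term $W_1$, whose subdifferential collapses to the singleton $\{b^*\}$ precisely when the dual optimizer is unique. The delicate point is certifying that the effective descent direction for the cost is governed by $b^*$; once this is established, the half-space feasibility argument via Gordan's alternative is routine.
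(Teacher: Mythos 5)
Your proposal is correct and follows essentially the same route as the paper's proof: reduce the claim to finding a common descent direction for $T$ (via Proposition \ref{prop:mitigate IMH}) and for $W_1(\mu,\mu^0)$ (via the envelope theorem applied to the Kantorovich--Rubinstein dual (\ref{eq:kantorovich dual of W1}), giving gradient $b^*$), and then observe that such a direction exists unless $\nabla T(\mu)$ and $b^*$ are anti-parallel. Two remarks: your explicit Gordan's-theorem step with the nondegeneracy caveat $\nabla T(\mu)\neq 0$, $b^*\neq 0$ is actually more careful than the paper, which asserts the final implication in a single line and silently ignores these degenerate cases (where condition (\ref{eq:condition of beneficial design}) can hold yet no strict common descent direction exists); and the obstacle you flag about the cost sensitivity is real but is shared by the paper, whose proof likewise drops the $\mu$-dependence entering through the (IR) constraint (the term $\alpha\,\rho_{\theta_i}[U(w^*(\xi),x^*)]$ that appears in $\partial\mathbb{L}_b/\partial\mu_i$) and treats the explicit $\gamma W_1(\mu,\mu^0)$ term as the only source of $\mu$-dependence in the objective.
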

\begin{proof}
Since in the objective function of (\ref{eq:full info benchmark}), the only term depending on $\mu$ is $W_1(\mu, \mu^0)$, we only need to show that $W_1(\mu',\mu^0)\leq W_1(\mu, \mu^0)$ for $\mu'=\mu+c\Delta\mu$.
Equivalently, we show that $\Delta\mu$ is a descent direction of $W_1(\mu, \mu^0)$.
To obtain the descent direction, we compute the gradient of $W_1(\mu, \mu^0)$ using the representation (\ref{eq:kantorovich dual of W1}).
Since (\ref{eq:kantorovich dual of W1}) is an optimization problem, the envelope theorem indicates that  
\small
\begin{equation}
    \nabla_{\mu}W_1(\mu, \mu^0)=\nabla_{\mu} \left( \sum_{i=1}^n b_i^*\mu_i-\sum_{i=1}^n b_i^*\mu_i^0 \right)=b^*,
\end{equation}
\normalsize
where $b^*$ solves (\ref{eq:kantorovich dual of W1}).
Then, $\Delta\mu$ descends $W_1(\mu, \mu^0)$ if 
\small
\begin{equation}
    (\Delta\mu)^T\cdot b^*\leq 0.
    \label{eq:condition of decrease objective}
\end{equation}
\normalsize
From Proposition \ref{prop:mitigate IMH}, we know that changing $\mu$ to $\mu'$ mitigates the IMH if $\Delta\mu$ satisfies condition (\ref{eq:condition of mitigating IMH}).
%We know from linear algebra that
Hence (\ref{eq:condition of mitigating IMH}) and (\ref{eq:condition of decrease objective}) lead to the condition (\ref{eq:condition of beneficial design}).
This completes the proof.
\end{proof}

The results in Proposition \ref{prop:mitigate IMH} and Proposition \ref{prop:beneficial design} lead to strengthened security for the following reasons.
A lower value of the IMH indicates a smaller difference in the action ideal for the insurer and the action the user chooses in reality. 
It means that there is an increased likelihood to sign a contract where the agreed action generates a reasonable amount of cyber risk that the insurer can bear, and it is not overly challenging for the user to execute.
Furthermore, to come up with such a contract, the insurer needs to prepare for the consumption required for shaping the users' perceptions.
Once the cost of risk design and the cost of coverage do not exceed the payment from the user, a contractual agreement which improves cyber security can be reached. We will elaborate on this point in Section \ref{sec:case}.

\section{Case Study: Linear Contracts}
\label{sec:case}
In this section, we use a special case study of linear contracts to show that risk preference design strengthens cyber security. 
We consider a scenario where an attacker infects a network with ransomware \cite{ge2022accountability}.
The network users can defend against the attacker by upgrading their firewalls to decrease the number of files locked by the ransomware.
This reduces the amount of ransom to pay at a cost of a system upgrade.
The users can also buy cyber insurance to cover a portion of their loss by prepaying a premium.

Consider two risk preference types where $\theta_1$ represents the expectation measure, \ie, $\rho_{\theta_1}[Z]=\mathbb{E[Z]}$, and $\theta_2$ represents the absolute semideviation measure \cite{shapiro2021lectures} with parameter $\kappa>0$, \ie, $\rho_{\theta_2}[Z]=\mathbb{E}[Z]+\kappa \mathbb{E}\{[Z-\mathbb{E}[Z]]_+\}$.
The original distribution of types is $\mu^0=(\mu^0_1,\mu^0_2)$.
We consider two available system upgrade investments $X=\{x_L, x_H\}$ with $x_L<x_H$ and three possible losses due to the ransomware $(\xi_1,\xi_2,\xi_3)=(1,2,3)$.
The probability of $(\xi_1,\xi_2,\xi_3)$ is $(0.3,0.4,0.3)$ under $x_L$ and $(0.5,0.3,0.2)$ under $x_H$.
Note that this indicates that when the security investment increases, there is a lower likelihood for more severe losses to occur.

The insurer offers a linear contract containing a coverage $\mathfrak{c}\in(0,1)$ with a premium $\mathfrak{p}>0$.
Her expected cost from this contract is $\mathbb{E}[\mathfrak{c}\xi-\mathfrak{p}]$.
Since $\mathfrak{c}\in(0,1)$, she prefers the high investment $x_H$ to the low investment $x_L$ from the user. We set the user's loss aversion to be quadratic; \ie, he perceives $\xi^2$ when the cyber loss is $\xi$.
The relative cost of investment is $\mathfrak{m}>0$.
If the user does not buy insurance, he receives the full cyber loss $\xi$ without any coverage from the insurer. 
The cost for the user is
$\Bar{U}(x)=\sum_{i=1,2}\mu^0_i\cdot\rho_{\theta_i}[\xi^2+\mathfrak{m}x]$.
If he purchases the insurance, the cost becomes $\Tilde{U}(x)=\sum_{i=1,2}\mu^0_i\cdot\rho_{\theta_i}[((1-\mathfrak{c})\xi)^2+\mathfrak{m}x+\mathfrak{p}]$.

Suppose $\mathfrak{m}(x_H-x_L)>1.1\mathfrak{c}^2+0.6\kappa \mathfrak{c}^2\mu^0_2$.
Then, $\Tilde{U}(x_H)>\Tilde{U}(x_L)$.
From (IC) of (\ref{eq:contract problem}), we know that the user will only agree on $x_L$ in a contract.
The moral hazard issue occurs because of the distinction between the insurer's and user's preferences over the security investments.
Note that participating in the contract is profitable if $\mathfrak{p}\leq (2\mathfrak{c}-\mathfrak{c}^2)\cdot(3.5+1.25\kappa\mu^0_2)$.

With risk preference design, it is straightforward to observe that as long as we increase the portion of type $\theta_2$ to $\mu_2$ such that $\mathfrak{m}(x_H-x_L)<1.1\mathfrak{c}^2+0.6\kappa \mathfrak{c}^2\mu_2$, the user prefers $x_H$ to $x_L$.
It means that by designing the distribution of risk preference types so that more individuals evaluate risk according to the absolute semideviation measure, the moral hazard issue can be mitigated.
One of the features of the absolute semideviation measure is that it emphasizes the high impact outcomes in spite of their low possibilities.
Hence, a user adopting this perception of risk shows aversion toward severe system losses caused by ransomware. 
The consequence of this risk preference design is that a reached contract elicits a high-level security effort from the user to protect the system.
It not only enables risk sharing between the user and the insurer but also enhances the overall defense against  ransomware.%, which strengthens the network security.

\section{Conclusion}
\label{sec:conclusion}
In this paper, we have proposed cyber insurance with a risk preference design framework to provide risk-sharing contracts for cyber-physical systems security.
We have defined risk preference types for users and adopted risk measures to capture their perceptions of cyber losses. 
We have made the distribution of the types a design parameter in the framework, arming the insurer with an additional dimension of freedom to provide incentive-compatible contracts.
Using the proposed measure for moral hazard, we have shown analytically that risk preference design can not only mitigate the moral hazard issue but also enhance the overall security of the system.

The structure of our framework is essentially a bilevel programming problem. 
These problems are hard to solve due to their structural complexity.
For future work, we would investigate representations of risk measures to simplify the risk preference design problem.
This effort could lead to the equivalents of the agent's problem, enabling single-level reformulations of the bilevel programming problem.

\bibliographystyle{IEEEtran}
\bibliography{bibliography.bib}
\nocite{*}

% biography section
% 
% If you have an EPS/PDF photo (graphicx package needed) extra braces are
% needed around the contents of the optional argument to biography to prevent
% the LaTeX parser from getting confused when it sees the complicated
% \includegraphics command within an optional argument. (You could create
% your own custom macro containing the \includegraphics command to make things
% simpler here.)
%\begin{IEEEbiography}[{\includegraphics[width=1in,height=1.25in,clip,keepaspectratio]{mshell}}]{Michael Shell}
% or if you just want to reserve a space for a photo:

% if you will not have a photo at all:

% insert where needed to balance the two columns on the last page with
% biographies
%\newpage

% You can push biographies down or up by placing
% a \vfill before or after them. The appropriate
% use of \vfill depends on what kind of text is
% on the last page and whether or not the columns
% are being equalized.

%\vfill

% Can be used to pull up biographies so that the bottom of the last one
% is flush with the other column.
%\enlargethispage{-5in}

% that's all folks
\end{document}